\newtheorem{theorem}{Theorem}[section]
\newtheorem{remark}[theorem]{Remark}
\newtheorem{lemma}[theorem]{Lemma}
\newtheorem{Note}[theorem]{Note}
\newcommand{\rnc}[2]{\renewcommand{#1}{#2}}
\rnc{\theequation}{\thesection.\arabic{equation}}
\begin{document}
%===============================================

\title{A note on "eight-vertex" universal quantum gates}
\author{Arash Pourkia\\Department of Mathematics, \\ American University of the Middle East (AUM), Egaila, Kuwait}

%\date{\today}
\maketitle

\begin{abstract}
Many well-known and well-studied four by four universal quantum logic gates in the literature are of a specific form, the so called eight-vertex form \eqref{8vertexform} \cite{kaufman etal 05-1,kaufman etal 05-2}, or {\it similar} to it. 

We present a formalism for universal quantum logic gates of such a form. First, we provide explicit formulas in terms of matrix entries, which are the necessary and sufficient conditions for such a matrix to be a solution to the Yang-Baxter equation \eqref{byb}. Then, combining this with the conditions needed for being unitary \eqref{unitarycond} and being entangling \eqref{entanglingcond}, we give a full description of entangling unitary solutions to the Yang-Baxter equation (hence, universal quantum logic gates) of such a specific form. We investigate in detail all the possible cases where some of the eight main entries might or might not be zero. 
\end{abstract}

\section{Introduction}
Kauffman and Lomonaco \cite{kaufman lomonaco 02, kaufman lomonaco 03, kaufman lomonaco 04} have pioneered the recent studies of relations between quantum entanglement and topological entanglement, in the field of quantum computing. In particular, they have suggested that the {\bf unitary solutions to the Yang-Baxter equation (Y-B) \eqref{byb}}  are crucially important in exploring such relations. Because on one hand, they are a source of universal quantum logic gates. On the other hand, they could yield some invariants of links/knots. The studies in this direction have been continued and further expanded by many authors, e.g., in \cite{kaufman etal 05-1,kaufman etal 05-2, yzhang,dye 03, arash-josep-2016,kzhang Yonzhang,jchen et al 07,pinto et al 13,alagic et al 15,lho et al 10}.\newline

When it comes to the four by four unitary solutions to the Yang-Baxter equation (Y-B) (which are universal quantum logic gates), the so called eight-vertex form \cite{kaufman etal 05-1,kaufman etal 05-2}
\[ \begin{pmatrix}
*&0&0&* \\
0&*&*&0 \\
0&*&*&0 \\
*&0&0&*
\end{pmatrix} \label{8vertexform}\]
is ubiquitous in the literature. They have been extensively studied in \cite{kaufman etal 05-1,kaufman etal 05-2}. All four by four unitary solutions to the Yang-Baxter (Y-B) equation classified in \cite{dye 03}, based on \cite{hietarinta}, are either of the form or {\it{similar}} to a form of \eqref{8vertexform}.  In \cite{arash-josep-2016} we obtained an infinite family of universal quantum logic gates, in the form \eqref{8vertexform}, related to cyclic groups of order $n$. More famously, as a special case, the Bell matrix \cite{kaufman etal 05-1,kaufman etal 05-2, yzhang} is of this particular form. More examples are included but not limited to the ones studied in \cite{kaufman lomonaco 02,kaufman lomonaco 03, kaufman lomonaco 04, kaufman etal 05-1,kaufman etal 05-2, yzhang, dye 03, arash-josep-2016, kzhang Yonzhang}. \newline

In the present paper we provide a general formalism for four by four entangling unitary solutions to the Yang-Baxter equation of the form \eqref{8vertexform}. We provide explicit formulas in terms of matrix entries, which are the necessary and sufficient conditions for such a matrix to be a solution to the Y-B equation  \eqref{byb}. We Combine this with the conditions needed for being unitary \eqref{unitarycond} and being entangling \eqref{entanglingcond}  to give a full polynomial description in terms of matrix entries of universal quantum logic gates of the form \eqref{8vertexform}. We investigate in detail all the possible cases where some of the eight main entries might or might not be zero. \newline

This paper is organized as follows. In Section \ref{preliminaries} we provide, very briefly, the main preliminaries needed for the succeeding sections. 
In Section \ref{mainlemma}, we give a detailed description, in terms of entries, for a matrix of the form,
$R=\begin{pmatrix}
a&0&0&b \\
0&c&d&0 \\
0&e&f&0 \\
g&0&0&h
\end{pmatrix}$ to be a unitary solution to Y-B equation \eqref{byb}, hence a quantum logic gate. Then we analyze all the possible cases where some of the entries might be zero or non of them zero. Finally for each case we investigate the conditions on entries that make $R$ entangling, and hence universal as a quantum logic gate \cite{brylinskis 02}. In Section \ref{summary} we give a concise summary of our results from previous sections.

\section{Preliminaries} \label{preliminaries}
Let $V$ be a $n$ dimensional complex vector (Hilbert) space, and let $R:V\otimes V \rightarrow V\otimes V$ be a linear map. $R$ can be represented as a $n^2$ by $n^2$ matrix for some basis of $V$. $R$ is said to be a solution to the (braided) Y-B equation, if it satisfies the following relation, \cite{kaufman lomonaco 02, kaufman lomonaco 04}.
\[ (R\otimes I)(I\otimes R)(R\otimes I)=(I\otimes R)(R\otimes I)(I\otimes R), \label{byb}\]
where $I$ is the identity map on $V$. 
\begin{remark}
Follwoing \cite{kaufman lomonaco 04} we call the above equation the braided Y-B equation. It is true that composing $R$ with the swap gate,
\[ S=\begin{pmatrix}
1&0&0&0 \\
0&0&1&0 \\
0&1&0&0 \\
0&0&0&1
\end{pmatrix} \label{swapgate},\] results in a solution to the so called algebraic Y-B equation (and vice versa), 
\[R_{12}R_{13}R_{23}=R_{23}R_{13}R_{12}, \label{ayb}\]
 where $R=\sum_{i} \mathfrak{r}_i\otimes \mathfrak{s}_i$, $R_{12}=\sum_{i} \mathfrak{r}_i\otimes \mathfrak{s}_i\otimes 1$, $R_{13}=\sum_{i} \mathfrak{r}_i\otimes 1\otimes \mathfrak{s}_i$, and $R_{23}=\sum_{i} 1\otimes \mathfrak{r}_i\otimes \mathfrak{s}_i$, \cite{kaufman lomonaco 04}.

\end{remark}
The topic of topological entanglement in a nutshell is as follows \cite{kaufman lomonaco 02, kaufman lomonaco 03, kaufman lomonaco 04}. If $R$ is invertible, it provides an infinite family of braid group representations. Which, in  turn, yield some invariants of links/knots. In such a process $R$ (or $R^{-1}$) is the operator for over-crossing (or under-crossing) moves \cite{jones1985, kauffman1987, turaev 1992}.\newline

As for the topic of quantum entanglement in quantum computing, it could be summarized as follows. When $R$  is unitary, i.e.,
\[ R^{-1}=R^{\dagger}, \label{unitarycond}\]
where $R^{\dagger}$ is the conjugate transpose of $R$, then $R$ could be considered as a quantum logic gate acting on $n$-qudits $|\Phi \rangle$, which are $n^2$-component vectors in $V \otimes V$. A {\it quantum state} which, abstractly, is represented by a $n$-qudit $|\Phi \rangle$ is called to be entangled if it can not be decomposed into tensor product of two other states from $V$. $R$ is said to be an entangling quantum gate if it creates entangled states by acting on un-entangled ones \cite{kaufman lomonaco 02, kaufman lomonaco 03, kaufman lomonaco 04, brylinskis 02}. In other words: 
 \[ {\text {$R$ is entangling if there exist an un-entangled state}} \, |\Phi \rangle \, {\text {such that}}\, R|\Phi \rangle \, {\text {is entangled}} \label{entanglingcond}\] 
A quantum gate $R$ is {\it universal} if and only if it is entangling \cite{brylinskis 02}.\newline

In the present paper we focus only on two dimensional complex vector (Hilbert) spaces $V$. Thus, $R:V\otimes V \rightarrow V\otimes V$ can be represented by a $4$ by $4$ matrix with entries in $\mathbb{C}$, with respect to a basis of $V$, namely $\{ |0\rangle , |1\rangle \}$. 
For $V\otimes V$, we preferably use the so called computational basis for two qubit states: 
$\{ |00\rangle ,|01\rangle,|10\rangle,|11\rangle \}$.

%%%%%%%%%%%%%%%%%%%%%%%%%%%%%%%%%
\section{Main Lemma} \label{mainlemma}
In this section, in the following lemma we provide the necessary and sufficient conditions in terms of matrix entries, for a matrix of the form,
$R=\begin{pmatrix}
a&0&0&b \\
0&c&d&0 \\
0&e&f&0 \\
g&0&0&h
\end{pmatrix}$ to be a solution to Y-B equation \eqref{byb}. Plus the conditions for $R$ to be unitary. In the follwing sub-sections we will investigate what happens to the conditions on entries if we also want some of the entries to be zero or non of them zero. There, we also investigate the entangling (thus the universality) property. 

\begin{lemma} \label{lemma0}
Let $R$ be the following $4$ by $4$ matrix,
 
\begin{align}\label{generalR}
R&=
\begin{pmatrix}
a&0&0&b \\
0&c&d&0 \\
0&e&f&0 \\
g&0&0&h
\end{pmatrix}\\\nonumber
\end{align}
where, $a,b,c,d,e,f,g,h$ are complex numbers.

\begin{item}
\item[a)] $R$ is a solution to Y-B equation \eqref{byb}, if and only if:
\[bfg=bcg \label{yb1}\]
\[abc+be^2=a^2b+bch \label{yb2}\]
\[a^2c+bhg=ac^2+dce \label{yb3}\]
\[bge=dcf \label{yb4}\]
\[fce=bdg \label{yb5}\]
\[cf^2=c^2f \label{yb6}\]
\[ga^2+hcg=cga+d^2g \label{yb7}\]
\[gab+ch^2=c^2h+dce \label{yb8}\]

\[abd+bef=abe+bdc \label{yb9}\]
\[a^2b+bfh=abf+bd^2 \label{yb10}\]
\[bgf=cbg \label{yb11}\]
\[acb+bh^2=cbh+d^2b \label{yb12}\]
\[cfd=ebg \label{yb13}\]
\[ceb+dbh=ebh+fdb \label{yb14}\]
\[dcf=gbe \label{yb15}\]
\[cgb=gbf \label{yb16}\]

\[fbg=bgc \label{yb17}\]
\[efc=bgd \label{yb18}\]
\[ega+fdg=ceg+dga \label{yb19}\]
\[fce=dgb \label{yb20}\]
\[ga^2+hfg=e^2g+fga \label{yb21}\]
\[gbc=fgb \label{yb22}\]
\[gca+h^2g=ge^2+hgc \label{yb23}\]
\[gdc+hge=gef+hgd \label{yb24}\]

\[efd+f^2a=a^2f+bhg \label{yb25}\]
\[e^2b+fbh=afb+bh^2 \label{yb26}\]
\[f^2c=fc^2 \label{yb27}\]
\[egb=cfd \label{yb28}\]
\[gbd=efc \label{yb29}\]
\[gd^2+hgf=gfa+h^2g \label{yb30}\]
\[gab+h^2f=efd+hf^2 \label{yb31}\]
\[gcb=gfb \label{yb32}\]

\item[b)] $R$ is unitary (i.e.  $R^{-1}=R^{\dagger}$, where $R^{\dagger}$ is the conjugate transpose of $R$), if and only if,
 \[a\bar{a} + b\bar{b}=1,\quad c\bar{c} + d\bar{d}=1,\quad e\bar{e} + f\bar{f}=1,\quad g\bar{g} + h\bar{h}=1  \label{unitary1}\] and 
\[a\bar{g} + b\bar{h}=0,\quad c\bar{e} + d\bar{f}=0  \label{unitary2}\] 
\end{item}
\end{lemma}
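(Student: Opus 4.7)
The entire statement reduces to direct matrix computation; the only real question is how to organize the bookkeeping.

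\textbf{Part (a).} The plan is to compute both sides of \eqref{byb} as explicit $8\times 8$ matrices and match entries. First I would write
\[
R\otimes I \;=\;
\begin{pmatrix}
aI & 0 & 0 & bI\\
0 & cI & dI & 0\\
0 & eI & fI & 0\\
gI & 0 & 0 & hI
\end{pmatrix},
\qquad
I\otimes R \;=\;
\begin{pmatrix}
R & 0\\ 0 & R
\end{pmatrix}
\]
(with $I$ the $2\times 2$ identity in the first display). The eight-vertex sparsity of $R$ makes both tensor products very sparse: each of them has at most sixteen nonzero entries, and their nonzero positions follow an explicit pattern. Multiplying out $L:=(R\otimes I)(I\otimes R)(R\otimes I)$ and $M:=(I\otimes R)(R\otimes I)(I\otimes R)$ yields two $8\times 8$ matrices whose nonzero entries are cubic monomials in $a,b,c,d,e,f,g,h$. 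The conditions (yb1)--(yb32) are exactly what one obtains by equating $L_{ij}=M_{ij}$ for the thirty-two positions where at least one side is nonzero; all other positions are automatically zero on both sides.

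\textbf{Part (b).} The conjugate transpose $R^\dagger$ has the same eight-vertex shape, with entries conjugated and rows/columns swapped in the obvious way. A direct multiplication gives
\[
RR^\dagger \;=\;
\begin{pmatrix}
|a|^2+|b|^2 & 0 & 0 & a\bar g+b\bar h\\
0 & |c|^2+|d|^2 & c\bar e+d\bar f & 0\\
0 & \bar c e+\bar d f & |e|^2+|f|^2 & 0\\
\bar a g+\bar b h & 0 & 0 & |g|^2+|h|^2
\end{pmatrix},
\]
and setting this equal to the identity yields exactly the four diagonal conditions \eqref{unitary1} and the two off-diagonal conditions \eqref{unitary2} (the remaining two off-diagonal equations are their complex conjugates).

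\textbf{Where the work lies.} There is no clever idea needed: the main obstacle is simply the sheer volume of the computation for part (a), namely sixty-four entry comparisons (of which thirty-two turn out to be nontrivial). The eight-vertex structure helps enormously because $R\otimes I$ and $I\otimes R$ both preserve a natural two-coloring of the basis of $V^{\otimes 3}$, so large blocks of the triple products vanish without any work. A second simplification is that the listed system is not irredundant: several of the thirty-two equations coincide after cancellation (for instance (yb4), (yb15), (yb20) and (yb28) all encode the single relation $bge=cdf$, and (yb6), (yb27) together with (yb1), (yb11), (yb16), (yb17), (yb22), (yb32) encode $cf(f-c)=0$ and $bg(c-f)=0$). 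I would present the calculation by listing, for each of the eight columns of $R$, the resulting entries of $L$ and $M$ along that column, and reading off the equations directly; no further argument is required.
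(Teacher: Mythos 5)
Your proposal is correct and takes essentially the same route as the paper, whose proof is exactly this direct computation: expand $(R\otimes I)(I\otimes R)(R\otimes I)$ and $(I\otimes R)(R\otimes I)(I\otimes R)$ as sparse $8\times 8$ matrices, equate the (at most thirty-two, by the parity-preserving block structure) nontrivial entries to get \eqref{yb1}--\eqref{yb32}, and read \eqref{unitary1}--\eqref{unitary2} off $RR^{\dagger}=I$. One small slip in your aside on redundancy: \eqref{yb20} encodes $cef=bdg$ and so belongs with \eqref{yb5}, \eqref{yb18}, \eqref{yb29}, not with the family \eqref{yb4}, \eqref{yb13}, \eqref{yb15}, \eqref{yb28} encoding $bge=cdf$; this does not affect the argument.
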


\begin{proof}
Proof is by direct and straightforward calculations from the definitions, with the proof of part (a) being lengthy.  
\end{proof}

\begin{Note} \label{note1}
It is important to notice the followings. As results of relations \eqref{unitary1} and \eqref{unitary2} in Lemma \ref{lemma0} we have, on one hand,
\[b=0 \Leftrightarrow g=0,\quad c=0 \Leftrightarrow f=0, \quad e=0 \Leftrightarrow d=0, \quad a=0 \Leftrightarrow h=0 \label{note1}   \]

On the other hand, 
\[b=0 \Rightarrow a\neq 0, \quad c=0 \Rightarrow d\neq 0, \quad e=0 \Rightarrow f\neq 0, \quad g=0 \Rightarrow h\neq 0, \label{note2}   \]
In what follows, we will use the above facts very often, without necessarily referring to them.  
\end{Note}

\begin{Note} \label{note2}
It also important to notice that all the results of the above lemma and the rest of results that follows in subsequent sections will stay valid, if we replace $R$ with any phase deformation of $R$ by a phase factor $e^{i\varphi}$. Namely, replacing $b$ with $be^{i\varphi}$ and $g$ with $ge^{-i\varphi}$ in all the above and in all what follows, will not effect any of our results.
\end{Note}

%%%%%%%%%%%%%%%%%%%
{\bf {When some or non of entries are zero: }} %\label{vanishandnonvanish}
In the following sub-sections, for $R$ as a unitary solution to YB equation, we will investigate the conditions on entries when some of them or non of them are zero. In each case we will find conditions that guarantee the gate $R$ to be entangling and therefore universal \cite{brylinskis 02}. We start with all the possible cases when some entries of $R$ are zero. \newline

\subsection{\bf{Case 1: $b=g=0$}} \label{b=g=0}
When $b=g=0$ (which, from Note \ref{note1}, implies $a, h\neq 0$), from \eqref{yb4} we have, $dcf=0$. Therefore we can have only one of two cases, consistent with all the relations \eqref{yb1}-\eqref{unitary2}:

{{\bf{Either:}} $c=f=0$ (which implies $d, e\neq 0$), for which we have: 
\begin{align}\label{case1-1}
R&=
\begin{pmatrix}
a&0&0&0 \\
0&0&d&0 \\
0&e&0&0 \\
0&0&0&h
\end{pmatrix}\\\nonumber
\end{align}
where, $|a|=|d|=|e|=|h|=1$.\\ 

In this case $R$ is entangling \cite{kaufman lomonaco 04}, and therefore a universal quantum logic gate  \cite{brylinskis 02}, if $ah \neq de$. To see this it is enough, for example, to act $R$ on the un-entangled state 
$\begin{pmatrix}
1 \\
1 \\
1 \\
1
\end{pmatrix}\nonumber$ in the computational basis.

In fact the value $|ah-de|$ is a measure of entanglement. Because if we apply $R$ to an arbitrary un-entangled state
$\begin{pmatrix}
x \\
y \\
z \\
w
\end{pmatrix}\nonumber$, where, $xw=yz \neq 0$, the outcome
$\begin{pmatrix}
ax \\
dz \\
ey \\
hw
\end{pmatrix}\nonumber$ is entangled \cite{kaufman lomonaco 02,kaufman lomonaco 04}, only if ($xw \neq 0$ and) $ah-de \neq 0$. More rigorously, the concurrence \cite{wooter-hill-1997} of the outcome state is $2|(ah-de)xw|$.\\

{\bf{Or:}} $d=e=0$ (which implies $c, f\neq 0$), for which we have the followings: 
From \eqref{yb3} $a=c$, from \eqref{yb6} $c=f$, and from \eqref{yb8} $c=h$. Therefore:
\begin{align}\label{case1-2}
R&=
\begin{pmatrix}
a&0&0&0 \\
0&a&0&0 \\
0&0&a&0 \\
0&0&0&a
\end{pmatrix}=aI\\\nonumber
\end{align}
where, $|a|=1$. In this case, obviously $R$ is not entangling. 

\subsection{\bf{Case 2: $c=f=0$}} \label{c=f=0}
When $c=f=0$, from \eqref{yb3} we have, $bhg=0$. Therefore there are two cases to discuss:

{\bf{Either:}} $b=g=0$. But this is just the case \eqref{case1-1} discussed in {\bf {Case 1}} above.

{\bf{Or:}}  $a=h=0$, for which we have from \eqref{yb2} $e=0$ (which implies also $d=0$). But this contradicts the fact that $c=f=0$ (refer to Note \eqref{note1}). 

\subsection{\bf{Case 3: $a=h=0$}} \label{a=h=0}
When $a=h=0$ (which implies $b,g \neq 0$), From \eqref{yb1} we have $c=f$. Also from \eqref{yb2} we have $e=0$ (therefoer $d=0, and $ and $c,f \neq 0$). All these mean: 
\begin{align}\label{case3-1}
R&=
\begin{pmatrix}
0&0&0&b \\
0&c&0&0 \\
0&0&c&0 \\
g&0&0&0
\end{pmatrix}\\\nonumber
\end{align}
where, $|b|=|c|=|g|=1$.\\ 

Similar to the case $c=f=0$ in Case \eqref{b=g=0}, in this case $R$ is entangling \cite{kaufman lomonaco 04} and therefore a universal quantum logic gate  \cite{brylinskis 02}, if $bg \neq c^2$. In fact, $|bg-c^2|$ is a measure of entanglement. Because, if we apply $R$ to an arbitrary un-entangled state
$\begin{pmatrix}
x \\
y \\
z \\
w
\end{pmatrix}\nonumber$, where, $xw=yz \neq 0$, the outcome
$\begin{pmatrix}
bw \\
cy \\
cz \\
gx
\end{pmatrix}\nonumber$ is entangled \cite{kaufman lomonaco 02,kaufman lomonaco 04}, only if ($xw \neq 0$ and) $bg-c^2 \neq 0$. More rigorously, the concurrence \cite{wooter-hill-1997} of the outcome state is $2|(bg-c^2)xw|$.\\

\subsection{\bf{Case 4: $d=e=0$}} \label{d=e=0}
When $d=e=0$ (which implies $c,f \neq 0$), first we realize that if at the same time $b=g=0$ then we arrive at the case \eqref{case1-2} discussed in {\bf {Case 1}} above. Also if $a=h=0$ then we arrive at the {\bf{case 3}} discussed above.

Therefore we assume $b,g \neq 0$ and $a,h \neq 0$. Then we will have the followings, consistent with the relations \eqref{yb1}-\eqref{yb32}:\\

From \eqref{yb1}, $c=f$. From \eqref{yb2}, $ac=a^2+ch$. From \eqref{yb3}, $a^2c+bhg=ac^2$. From \eqref{yb8}, $gab+ch^2=c^2h$. From \eqref{yb12}, $ac+h^2=ch$.  \\

Comparing the second and the last of these relations we get $h^2=-a^2$ or $h=\pm ia$. Putting this back into the second relation gives us $a=(1\mp i)c$. But now considering the unitary relations \eqref{unitary1} and \eqref{unitary2} implies that $|a|=\sqrt{2}\,|c|=\sqrt{2}$, which contradicts the unitary relations!. \newline

\subsection{No zero entries \label{nozeroentries}}
Now we proceed to the case when no entries of $R$ are zero. First from \eqref{yb1} we have:
\[c=f \label{c=f}\]
Next, from \eqref{yb10} and \eqref{yb12} we have:
\[a^2-d^2=d^2-h^2=ac-hc \label{a^2-d^2=d^2-h^2=ac-hc}\]
Which in turn implies: 
\[a^2+h^2=2d^2 \label{a^2+h^2=2d^2} \]

Next, from \eqref{yb2} and \eqref{yb7}we have: 
\[ d^2=e^2 \quad \Longleftrightarrow \quad e=d \quad OR \quad e=-d \label{d2=e2}\]
From here, we analyze all possible outcomes of the relations \eqref{yb1}-\eqref{yb32} in two cases, either $e=d$ or $e=-d$.\newline

{\bf Either $e=-d$:} From \eqref{yb9} and \eqref{yb14}, respectively, we have: 
\[a=c \quad {\text{and}} \quad c=h \label{a=c=h}\]
These together with \eqref{a^2+h^2=2d^2} imply:
\[d^2=a^2 \quad \Longleftrightarrow \quad a=\pm d \label{d^2=a^2}\]
Also from \eqref{yb13} we have:
\[bg=-c^2 \label{bg=-c^2}\]

Therefore $R$ has the following form:
\begin{align}\label{case5-1}
R&=
\begin{pmatrix}
a&0&0&b \\
0&a&\pm a&0 \\
0&\mp a&a&0 \\
g&0&0&a
\end{pmatrix}\\\nonumber
&=
a\begin{pmatrix}
1&0&0&p \\
0&1&\pm 1&0 \\
0&\mp 1&1&0 \\
-p^{-1}&0&0&1
\end{pmatrix}\\\nonumber
\end{align}
where, $bg=-a^2$. Also from relations \eqref{unitary1} and \eqref{unitary2} we have, $|a|=|b|=|g|={\dfrac{1}{\sqrt{2}}}$. In the second representation, $p=\dfrac{b}{a}$.\newline

{\bf Or $e=d$:} As in \eqref{c=f},  \eqref{a^2-d^2=d^2-h^2=ac-hc}, and \eqref{a^2+h^2=2d^2} we have: 
\[ {c=f}, \quad {a^2-d^2=d^2-h^2=ac-hc}, \quad {a^2+h^2=2d^2} \label{e=drlations}\] 
Also from \eqref{yb13} we have:
\[bg=c^2 \label{bg=c^2}\]

Therefore $R$ has the following form:
\begin{align}\label{case5-2}
R&=
\begin{pmatrix} 
a&0&0&b \\
0&c&d&0 \\
0&d&c&0 \\
g&0&0&h
\end{pmatrix}\\\nonumber
\end{align}
where, $a,b,c,d,g,h$ satisfy the relations \eqref{e=drlations} and \eqref{bg=c^2},  plus the unitary relations \eqref{unitary1} and \eqref{unitary2}.

\begin{remark}
We remark that $R$ with no zero entries is always an entangling and therefore a universal quantum gate \cite{kaufman lomonaco 04, brylinskis 02}. This is because, for example, 
$R\begin{pmatrix}
1 \\
0 \\
0 \\
0
\end{pmatrix}=\begin{pmatrix}
a \\
0 \\
0 \\
g
\end{pmatrix}$, which is obviously entangled, since $ag \neq 0$. 
\end{remark}

\begin{remark}
We also remark that the quantum gates introduced in \cite{arash-josep-2016} are a very special case of $R$ \eqref{case5-2} in the last case, when $h=d=a$, $g=b=-c$, satisfying unitary relations \eqref{unitary1}.
\end{remark}

%%%%%%%%%%%%%%%%%%%%%%%%%%%%%%%%%%%%%%%%%%%%%%%%%%%%%%%%%%%%%%%%%%%%%%%%%%%%%%%%%%
\section{Summary} \label{summary}
Here we summarize, for a matrix $R$ of the form \eqref{8vertexform}, all the possible cases where $R$ is a unitary solution to Y-B equation \eqref{byb} and also entangling. Therefore a universal quantum logic gate.

{\bf 1:}
\begin{align}\label{case1-s}
R&=
\begin{pmatrix}
a&0&0&0 \\
0&0&d&0 \\
0&e&0&0 \\
0&0&0&h
\end{pmatrix}\\\nonumber
\end{align}
where, $|a|=|d|=|e|=|h|=1$. $R$ is entangling and therefore a universal quantum logic gate iff $ah \neq de$. \\

{\bf 2:}
\begin{align}\label{case2-s}
R&=
\begin{pmatrix}
0&0&0&b \\
0&c&0&0 \\
0&0&c&0 \\
g&0&0&0
\end{pmatrix}\\\nonumber
\end{align}
where, $|b|=|c|=|g|=1$. $R$ is entangling and therefore a universal quantum logic gate, iff $bg \neq c^2$. \\

{\bf 3:}
\begin{align}\label{case3-s}
R&=
\begin{pmatrix}
a&0&0&b \\
0&a&\pm a&0 \\
0&\mp a&a&0 \\
g&0&0&a
\end{pmatrix}\\\nonumber
\end{align}
where, $bg=-a^2$, and $|a|=|b|=|g|={\dfrac{1}{\sqrt{2}}}$. $R$ is always entangling and therefore a universal quantum logic gate. \newline

{\bf 4:}
\begin{align}\label{case4-s}
R&=
\begin{pmatrix} 
a&0&0&b \\
0&c&d&0 \\
0&d&c&0 \\
g&0&0&h
\end{pmatrix}\\\nonumber
\end{align}
where, $a,b,c,d,g,h$, non of them zero, satisfy the relations,
\[  {a^2-d^2=d^2-h^2=ac-hc}, \quad {a^2+h^2=2d^2}, \quad bg=c^2 \label{4} \] 
  plus the unitary relations,  
 \[a\bar{a} + b\bar{b}=1,\quad c\bar{c} + d\bar{d}=1,\quad g\bar{g} + h\bar{h}=1  \label{unitary1r}\] and 
\[a\bar{g} + b\bar{h}=0,\quad c\bar{d} + d\bar{c}=0  \label{unitary2r}\]

$R$ is always entangling and therefore a universal quantum logic gate.\newline

The author would like to thank Dr. Josep Batle, the co-author in \cite{arash-josep-2016}, for introducing him to this field of research.

%===============================================
\end{document}